\documentclass[runningheads]{llncs}

\usepackage[T1]{fontenc}
\usepackage[utf8]{inputenc}
\usepackage{amsmath,amssymb}

\usepackage{graphicx}
\usepackage{epstopdf}
\DeclareGraphicsExtensions{.pdf,.png,.jpg,.jpeg,.eps}
\usepackage{wrapfig}
\usepackage{picinpar}
\usepackage{multirow}
\usepackage{tabularx}
\usepackage{colortbl}
\usepackage{tikz}

\usepackage[ruled,linesnumbered,vlined]{algorithm2e}

\usepackage{xspace}
\usepackage{float}
\usepackage{url}

\newtheorem{myTheo}{Theorem}

\begin{document}
\title{Dual-Metric Partitioning with Adaptive Kernel Execution
	for Efficient GCN Acceleration}

\titlerunning{Dual-Metric Partitioning for Efficient GCN Acceleration}

\author{
	Lingling Zhang\inst{1} \and
	Hang Zeng\inst{1} \and
	Pengpeng Qiao\inst{2}\thanks{Corresponding author.} \and
	Zhiwei Zhang\inst{3} \and
	Ye Yuan\inst{3} \and
	Guoren Wang\inst{3}
}

\authorrunning{L. Zhang et al.}

\institute{
	Capital Normal University, Beijing, China\\
	\email{\{7089,2251002056\}@cnu.edu.cn}
	\and
	Institute of Science Tokyo, Tokyo, Japan\\
	\email{peng2qiao@gmail.com}
	\and
	Beijing Institute of Technology, Beijing, China\\
	\email{\{zwzhang,yuan-ye\}@bit.edu.cn, wanggrbit@126.com}
}

%
%
\maketitle              
\begin{abstract}
Graph Convolutional Networks (GCNs) are widely used for large
graph-structured data, including social, citation, and e-commerce
networks, but their deployment is constrained by irregular memory
access and severe GPU workload imbalance. These challenges arise in
two dimensions: width imbalance from power-law degree distributions
and depth imbalance from heterogeneous neighborhood connectivity. We
present DualGCN, a GPU acceleration framework addressing both
dimensions through dual-metric graph partitioning and adaptive kernel
execution. DualGCN combines node degree, reflecting aggregation width,
with neighborhood density estimated by anonymous random walks,
capturing multi-hop connectivity and access depth. This hybrid workload
metric enables connectivity-aware partitioning of large graphs into
sparse and dense regions while reducing workload imbalance from linear
to logarithmic complexity. DualGCN then selects partition-specific
execution strategies: sparse partitions use warp-level parallelism and
coalesced memory access, whereas dense partitions exploit
instruction-level parallelism to hide latency and improve GPU
utilization. Experiments on twelve real-world graph datasets show that
DualGCN consistently accelerates GCN computation, achieving average
speedups of $2.53\times$, $3.8\times$, and $2.13\times$ over cuSPARSE,
GNNAdvisor, and ACCEL, respectively. These results demonstrate that
jointly optimizing graph partitioning and kernel execution provides an
effective solution for processing large-scale graph and social-network
workloads.
\keywords{Graph convolutional networks \and GPU acceleration \and graph partitioning \and sparse--dense matrix multiplication}
\end{abstract}
\section{Introduction}
\label{sec:introduction}
Graph Convolution Networks (GCNs) represent an extension of deep learning techniques to the domain of graph-structured data, enabling the processing and analysis of relational information \cite{wu2020comprehensive,qiao2024astore}. GCNs excel in applications such as node classification, recommendation systems, and traffic forecasting, demonstrating their versatility and effectiveness in processing interconnected data. The deployment of GCNs in these applications imposes strict constraints on latency and throughput \cite{wang2023tc,gurevin2024prunegnn}. Consequently, GPU platforms have become the predominant choice for designing and accelerating GCN training, owing to their ability to meet these performance demands. Building on this trend, GCN accelerators on GPUs have attracted significant attention by focusing on GPU designs that are adaptable to graph processing.

Conventional GCN accelerators execute two alternating phases
\cite{wang2023tc}: an \emph{aggregation phase}, in which a sparse and
irregular adjacency matrix $\mathbf{A}$ is multiplied by a dense node
feature matrix $\mathbf{X}$ through sparse--dense matrix multiplication
(SpMM), and an \emph{update phase}, which performs neural-network
operations, primarily multiplication with a small dense weight matrix
$\mathbf{W}$. Because SpMM accounts for more than 80\% of GCN training
time \cite{wang2023tc,wang2021gnnadvisor}, most GCN accelerators focus
on optimizing this operation. Existing approaches improve GCN
computation through algorithmic techniques
\cite{wang2021gnnadvisor,peng2024maxk}, architecture-level
optimizations \cite{chen2022regnn}, or combinations of both
\cite{gurevin2024prunegnn,you2022gcod,xie2023accel}. Algorithmic
approaches reorganize or partition graph data using techniques such as
runtime community detection and graph partitioning to improve workload
balance and data locality under power-law degree distributions
\cite{wang2021gnnadvisor,xie2023accel}. Architecture-level approaches
manage thread blocks, warps, and shared memory to mitigate workload
imbalance caused by irregular graph structures
\cite{you2022gcod,peng2024maxk}.

State-of-the-art GCN accelerators face algorithmic and architectural
challenges that limit scalability and efficiency. Graph-partitioning
frameworks based on clustering coefficients or community detection
incur prohibitive time complexity of at least
$O(\frac{m^2}{n})$~\cite{hwang2023grow,ccatalyurek2012multithreaded},
where $m$ and $n$ denote edge and vertex counts, respectively. To
reduce this complexity, existing frameworks rely solely on node degree
information from the adjacency matrix's power-law distribution
\cite{wang2021gnnadvisor,xie2023accel}, overlooking neighbor interaction
patterns and multi-hop aggregation dependencies that affect
computational performance and efficiency. Similarly, architecture-level
frameworks employ simplified GPU resource management using static warp
assignments based only on adjacency-matrix row sparsity
\cite{chen2022regnn}, failing to adapt to runtime workload variations
or jointly optimize feature-matrix dimensions and nonzero-element
distributions for dynamic workload balancing and resource utilization
\cite{xie2023accel,peng2024maxk}.

To address these challenges, we propose DualGCN, a hybrid
algorithmic--architectural design that coordinates graph partitioning
with adaptive GPU execution. Algorithmically, DualGCN analyzes graph
connectivity patterns to partition workloads into sparse and dense
regions. Architecturally, it employs an adaptive dual-path strategy
that assigns blocks and warps according to feature dimensionality and
nonzero-element distribution. This integrated design improves data
locality, balances workloads, and increases overall GPU utilization
across diverse graph structures. Our system makes the following
contributions:

\textbullet\  \textbf{Connectivity-aware Data Partitioning:} We propose a data partitioning scheme based on anonymous random walks to efficiently detect and analyze neighborhood connectivity patterns. By jointly considering node connectivity and degree distributions, our method can divide the graph data into sparse and dense parts and achieve balanced partitioning with linear time complexity, significantly improving data locality during GCN computations.

\textbullet\ \textbf{Dynamic Resource Management:} We develop a dual-path workload distribution strategy that adaptively allocates blocks and warps based on workload characteristics. This DualGCN strategy achieves superior load balance and memory access efficiency across diverse graph structures, developed into a kernel algorithm. The kernel manages GPU resources according to the workload distribution associated with the data partitions, efficiently exploiting parallelism when executing the SpMM computations in GCN models.

\textbullet\  \textbf{Comprehensive Evaluation:} Through extensive experiments on real-world graph datasets, we demonstrate that DualGCN outperforms the state-of-the-art methods, such as cuSPARSE \cite{naumov2010cusparse}, GNNAdvisor \cite{wang2021gnnadvisor}, and ACCEL \cite{xie2023accel} by factors of $2.53\times$, $3.8\times$, and $2.13\times$, on average. The results validate the effectiveness of our joint algorithm-architecture optimization approach in accelerating GCN computations while maintaining balanced resource utilization.
\section{Background and Motivation}
\subsection{Graph Convolution Networks}\label{sec2_gcn}
Graph Convolution Networks (GCNs) process graph-structured data through iterative neighborhood aggregation and feature transformation. For graph $G=(V,E)$ with $|V|$ nodes and $|E|$ edges, where each node has an $F$-dimensional feature vector, GCNs compute node embeddings to capture both local structure and node features. As shown in Figure \ref{GNN_flow}, each GCN layer transforms node $v$'s embedding at layer $k+1$ ($h_v^{(k+1)}$) by combining: (i) the node's features from layer k; (ii) aggregated features from neighboring nodes $N(v)$; and (iii) edge features $e_{vu}$ connecting node $v$ to neighbor $u$. The computation is: $h_v^{(k+1)} = f\!\left(h_v^{(k)},\{(h_u^{(k)},e_{vu})\mid u\in N(v)\}\right)$, where $f$ is a learnable update function, typically a neural network. Through multiple layers (Figure \ref{GNN_flow}), GCNs progressively capture wider neighborhood information, learning representations that reflect both local connectivity and global graph properties.
\begin{figure}[hbt!]
	\centering
	\includegraphics[height=0.38\textwidth]{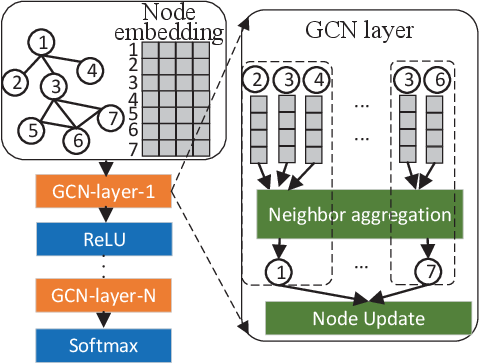}
	\caption{An Example of GCN General Computation Flow.}\label{GNN_flow}
	\vspace{-2.5em}
\end{figure}

GCN variants like GraphSAGE \cite{hamilton2017inductive} and Graph Isomorphism Network (GIN) \cite{xu2018powerful} employ different aggregation functions but follow the same forward propagation model as standard GCNs. The feature aggregation phase, implemented as Sparse-Dense Matrix-Matrix Multiplication (SpMM), dominates GCN computation time---often consuming over 80\% of execution cycles due to its highly irregular memory access patterns. This computational bottleneck severely limits GCN scalability and real-time deployment. Since SpMM is fundamental to all GCN architectures and represents the primary performance barrier, accelerating its execution is essential for practical GCN applications.
\subsection{Related Work}
\textbf{Graph Partitioning Approaches.} Recent GCN acceleration frameworks have explored graph partitioning to improve data locality and workload balance. GNNAdvisor \cite{wang2021gnnadvisor} pioneered community-based partitioning combined with weight matrix dimensions for warp-level workload distribution, though it underperforms NVIDIA's cuSPARSE \cite{naumov2010cusparse}. GROW \cite{hwang2023grow} introduced row-wise product acceleration with clustered caching for high-degree nodes, but its $O(|V|^2)$ partitioning complexity limits scalability. Moreover, their static partitioning criteria cannot capture heterogeneous neighborhood structures, causing suboptimal workload assignments across graph topologies and feature dimensions under rapidly changing real-world workloads. These methods demonstrate that while partitioning improves locality, computational overhead and imbalanced partitions remain significant challenges for large-scale graphs.

\textbf{GPU Resource Management.} Several frameworks optimize GPU resource allocation to address workload imbalance. MergePath \cite{shan2023mergepath} evenly distributes non-zero elements across threads but requires hardware-specific parameter tuning. GCoD \cite{you2022gcod}, ACCEL \cite{xie2023accel}, and MaxK-GNN \cite{peng2024maxk} employ degree-based partitioning with optimized block and warp management, achieving better load balancing than static approaches. However, these methods still struggle with power-law degree distributions, where a few high-degree nodes create severe workload skew despite sophisticated scheduling strategies.

\textbf{Memory Optimization.} MEGA \cite{zhu2024mega} targets memory efficiency through degree-aware quantization, adaptive-package storage formats, and Condense-Edge scheduling for optimized access patterns. While these techniques reduce memory bandwidth requirements and improve cache utilization, they introduce training overhead and require complex mixed-precision hardware support. This trade-off between memory efficiency and computational complexity highlights the need for solutions that jointly optimize memory access and computational workload without excessive overhead.
\subsection{Motivation}

The limitations of existing GCN accelerators reveal three interdependent performance bottlenecks: data partitioning, workload allocation, and memory access patterns.

\textbf{Data Partitioning Challenges.} Graph adjacency matrices exhibit power-law degree distributions and heterogeneous connectivity patterns \cite{fu2022tlpgnn,xu2025continuous}. Traditional schemes consider only node degrees, overlooking connectivity. Figure \ref{GCN_example} illustrates---nodes '1' and 'a' have identical degrees but different neighborhood structures. In a two-layer GCN, node '1' aggregates from a dense neighborhood while node 'a' processes a star topology, creating imbalance despite equal degrees. \emph{Effective partitioning must capture both degree distribution and connectivity patterns.}
\begin{figure}[hbt!]
	\centering
	\includegraphics[height=0.38\textwidth]{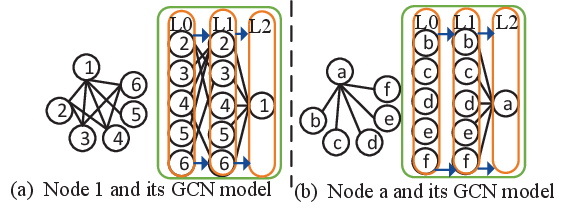}
	\caption{Two-layer GCN computation: (a) Node 1 with dense connectivity requires complex aggregation, (b) Node a with star topology requires simpler computation.}\label{GCN_example}
	\vspace{-2.5em}
\end{figure}

\textbf{Workload Allocation Inefficiencies.} GPUs offer multiple parallelism levels---threads, warps, and blocks---with data parallelism across partitions. Current frameworks optimize these independently, missing coordination opportunities. \emph{The challenge is mapping irregular graph workloads to regular GPU execution patterns while maintaining load balance.}

\textbf{Memory Access Inefficiencies.} Existing methods apply uniform memory strategies regardless of partition characteristics \cite{xie2023accel,peng2024maxk}. Sparse partitions suffer random access while dense partitions encounter bandwidth saturation. \emph{Adaptive strategies tailored to partition density could significantly improve throughput.}

These factors form a coupled optimization space---partitioning affects workload distribution, determining optimal memory patterns. \emph{High-performance GCN acceleration requires joint optimization across all dimensions, motivating our integrated approach that coordinates partitioning, workload allocation, and memory access based on graph characteristics.}
\section{DualGCN Framework}
\label{GNNA}
\subsection{Overview of DualGCN}
DualGCN (Graph Convolution Network Accelerator) is a framework built on two core components: data partitioning and workload distribution. The data partitioning component strategically divides graph data across GPU blocks to achieve balanced computational loads, segmenting the graph into dense and sparse regions based on node connectivity. The workload distribution component then optimizes how this partitioned data is processed, implementing specialized resource management strategies for SpMM computations. In sparse regions, each GPU block processes multiple rows using its assigned warps, while in dense regions, each block handles a single row that is further subdivided and assigned to warps - all coordinated with efficient memory management. 
Figure \ref{GNN_arch} illustrates an overview of the DualGCN architecture for optimized SpMM operations. It demonstrates how the multiplication between a sparse matrix (Figure \ref{GNN_arch}(a)) and a dense matrix (Figure \ref{GNN_arch}(b)) is executed through a dual-path workload assignment strategy (Figure \ref{GNN_arch}(c)). In this scheme, computational blocks are dynamically allocated: each block processes either multiple sparse rows or a single dense row, depending on the input matrix's density characteristics.
\begin{figure}[hbt!]
	\centering
	\includegraphics[height=0.38\textwidth]{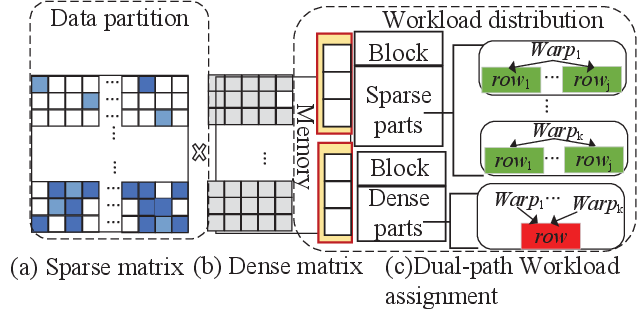}
	\caption{An Example of DualGCN Architecture.}\label{GNN_arch}
	\vspace{-2.5em}
\end{figure}
\subsection{Adaptive Data Partitioning}

\textbf{Motivation and Key Contribution:} Existing graph partitioning schemes for GCN acceleration typically rely on simple heuristics (e.g., node degree alone) that fail to capture the complex computational patterns in neighborhood aggregation. We present a theoretically-grounded adaptive partitioning scheme that jointly optimizes for both structural complexity and computational load balancing, enabling efficient dynamic resource allocation with provable performance bounds.

\subsubsection{Theoretical Foundation: Anonymous Random Walk Analysis}

We employ anonymous random walks to quantify the computational complexity of neighborhood aggregation beyond simple degree metrics. An anonymous random walk iteratively traverses a graph from a node to random neighbors, labeling each node by its first occurrence order in the sequence.

\begin{myTheo}\label{ARWalks}
Let $S = (\nu, d)$ be the set of anonymous random walks after $d$ steps, starting from node $\nu$. Let $P_d$ be the distribution of the number of unique nodes on the paths of these random walks. The sum $\sum_{i=1}^d iP_i$ characterizes the structural density of the subgraph induced by all nodes $\mu$ such that $dist(\nu,\mu) \leq d$, and provides a tighter bound on aggregation complexity than degree-based metrics alone.
\end{myTheo}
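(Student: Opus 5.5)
We make the informal statement precise and then prove two inequalities. Let $G_d(\nu)$ be the subgraph induced by $\{\mu : dist(\nu,\mu)\le d\}$. Interpret $P_i$ as the probability that a uniform random walk of length $d$ from $\nu$ visits exactly $i$ distinct nodes. The anonymous labelling records only first-occurrence order, so the number of distinct nodes equals the largest label in the anonymous sequence. Hence $\sum_i iP_i = \mathbb{E}[U_d]$, the expected number of distinct nodes visited by a $d$-step walk.

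We take "characterizes the structural density" to mean that $\mathbb{E}[U_d]$ is a monotone functional of the revisit structure of $G_d(\nu)$. Concretely, $d+1-\mathbb{E}[U_d]$ is the expected number of revisits. A revisit at step $t$ requires closing a cycle of length at most $t$ inside $G_d(\nu)$. So we plan to write
\[
\mathbb{E}[U_d] \;=\; 1+\sum_{t=1}^{d}\Pr[X_t\notin\{X_0,\dots,X_{t-1}\}]
\]
and to express each term through the transition probabilities of short closed walks in $G_d(\nu)$.

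\textbf{Step 1 (tree baseline).} In a locally tree-like neighbourhood with minimum degree at least $2$, the only revisits are immediate backtracks, each occurring with probability $1/\deg$. This yields an explicit value of $\mathbb{E}[U_d]$, computed by a routine recursion. This is the "sparse" extreme, e.g.\ the star topology of node 'a' in Figure~\ref{GCN_example}.

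\textbf{Step 2 (dense extreme and monotonicity).} When $G_d(\nu)$ is a clique on $k$ nodes, $\mathbb{E}[U_d]$ is given by a coupon-collector-type formula. This value is small relative to $d$ once $k\ll d$, and close to the tree value when $k\gg d$. We would then argue that adding edges that create short cycles (triangles, 4-cycles) within $G_d(\nu)$ increases the return probability. That increase moves $\mathbb{E}[U_d]$ monotonically away from the tree baseline toward the dense extreme. The argument uses $\Pr[\text{revisit at }t]\ge\sum_{\ell\le t}p_\ell(\nu)$, where $p_\ell$ is the closed-walk probability, which depends on cycle counts.

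\textbf{Step 3 (tighter than degree).} We exhibit two nodes with equal degree but different $\mathbb{E}[U_d]$, namely nodes '1' and 'a'. The aggregation cost of a $d$-layer GCN at $\nu$ is proportional to $|E(G_d(\nu))|$. Any degree-only bound must therefore be at least $\deg(\nu)^d$, the tree case. In contrast, an ARW-based estimate, in which $\mathbb{E}[U_d]$ serves as a sample of the effective neighbourhood size, reflects the overlap of neighbourhoods and is never larger. This gives "tighter" in the sense of domination, with strict inequality whenever $G_d(\nu)$ contains a cycle.

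\textbf{Main obstacle.} The hardest step is Step 2. Monotonicity of $\mathbb{E}[U_d]$ under edge addition is false in general, because adding edges can also raise degrees and thereby lower return probabilities. We would first try to prove it under a normalisation, comparing graphs with fixed degree sequence and varying cycle counts, via a coupling of the two walks. If that fails, we would settle for two-sided bounds relating $\mathbb{E}[U_d]$ to local cycle density.
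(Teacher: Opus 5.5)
The paper's proof consists only of a citation of \cite{micali2016reconstructing}, i.e.\ the reconstruction theorem. That theorem says the distributions of anonymous walk \emph{patterns} from $\nu$, for lengths up to about twice the number of edges of the ball, determine the radius-$d$ ball around $\nu$ up to isomorphism. That sense of ``characterizes'' belongs to the full pattern distribution, not to the scalar $\sum_i iP_i$. So analysing the first moment $\mathbb{E}[U_d]$ directly is a genuinely different route, but its central step points the wrong way. In a simple graph no revisit is possible at step $1$, and the only possible one at step $2$ is $X_2=X_0$. Hence $\mathbb{E}[U_2]=3-\mathbb{E}[1/\deg(X_1)]$. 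At depth $2$ (the two-layer case of Figure~\ref{GCN_example}) cycles therefore play no role, and adding edges among the neighbours of $\nu$ raises their degrees and \emph{increases} $\mathbb{E}[U_2]$. In your own example, a star centre with $k$ leaves has $\mathbb{E}[U_2]=2$, while a node of degree $k$ in a $(k+1)$-clique has $3-1/k$. For $d\ll k$ the star gives about $1+d/2$, because every other step is a forced return from a degree-one leaf, while the clique gives about $d+1$. So the sparse topology sits below the dense one, the reverse of the ordering Step~2 is meant to establish. Your fixed-degree normalisation cannot repair this. Once the degrees of $\nu$'s neighbours are fixed, $\mathbb{E}[U_2]$ is fixed whatever the cycle structure. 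Moreover, the comparison the statement is about (equal $\deg(\nu)$, neighbours of different degrees, as for `1' versus `a') lies outside the normalised class. Any two-sided fallback bound must therefore carry the degree profile of the ball, not just cycle counts.

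The supporting claims also fail. In a tree, revisits are not only immediate backtracks. In $\nu\to b\to c\to b\to\nu$ the last step is a revisit that does not reverse the previous step. In general the revisit probability at step $t$ is the number of already-traversed edges at $X_{t-1}$ divided by $\deg(X_{t-1})$, not $1/\deg$. The star also has degree-one leaves, so Step~1 does not cover it. The inequality $\Pr[\text{revisit at }t]\ge\sum_{\ell\le t}p_\ell(\nu)$ is false, since the right-hand side can exceed $1$ (for the star, $p_2=p_4=1$). What holds is $\Pr[\text{revisit at }t]\ge p_t(\nu)$, and $p_t(\nu)$ is governed by degrees along backtracking walks as much as by cycle counts (e.g.\ $p_2(\nu)=\mathbb{E}[1/\deg(X_1)]$). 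Finally, Step~3 does not yield a bound. For $d\ge 2$ the aggregation cost is not controlled by $\deg(\nu)$ at all, since one neighbour can have arbitrarily large degree, so there is no ``degree-only bound $\deg(\nu)^d$'' to compare against. Also, $\mathbb{E}[U_d]\le d+1$ for every graph while the cost is unbounded for fixed $d$ and $\deg(\nu)$, so $\mathbb{E}[U_d]$ is not an upper bound on the cost. Being never larger than a valid bound is not the same as being a tighter valid bound, and you never exhibit a function of the walk statistic that dominates the cost.
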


The proof follows from \cite{micali2016reconstructing}. \textbf{Key insight:} Theorem \ref{ARWalks} reveals that anonymous random walks capture multi-hop connectivity patterns critical for GCN computation that simple degree metrics miss, enabling more accurate workload prediction.

\subsubsection{Dual-Metric Workload Characterization}

We introduce a novel dual-metric approach that combines complementary graph properties:

\textbf{\emph{1. Neighbor Density ($\mathcal{N}_\nu$):}} We define the neighbor density of node $\nu$ as $\mathcal{N}_\nu = \sum_{i=1}^d iP_i$ from Theorem \ref{ARWalks}. This metric captures the \emph{computational depth} of neighborhood aggregation, accounting for indirect connections that impact memory access patterns and cache efficiency. Unlike degree-only metrics, neighbor density predicts the actual number of unique feature vectors accessed during multi-layer GCN propagation.

\textbf{\emph{2. Node Degree ($\mathcal{D}_\nu$):}} While graphs typically follow power-law degree distributions \cite{sala2010brief}, degree alone provides an incomplete picture. We use degree to capture \emph{computational width}---the immediate parallelization opportunity for neighbor aggregation.

\subsubsection{The Dual-Metric Approach}
\paragraph{Approach Overview.}
We propose a dual-metric approach combining complementary measures to capture GCN workload characteristics. Our hybrid metric $\mathcal{H}_\nu = \alpha\mathcal{N}_\nu + (1-\alpha)\mathcal{D}_\nu$ integrates node degree $\mathcal{D}_\nu$ (width) with neighborhood density $\mathcal{N}_\nu$ (depth), where $\alpha$ is a weighting parameter. The degree metric $\mathcal{D}_\nu$ corresponds to neighbors requiring aggregation, while density $\mathcal{N}_\nu$ is computed through $k$ anonymous random walks of length $d$ to capture local structure beyond immediate neighbors. This enables accurate workload prediction by considering both operation quantity (degree) and memory access complexity (density).
\paragraph{Algorithmic Strengths and Performance Guarantees.}
Our dual-metric approach addresses both workload imbalance dimensions simultaneously. Width imbalance from power-law degree distributions causes high-degree nodes to require more parallel threads. Depth imbalance stems from varying neighborhood densities creating disparate memory access patterns---dense neighborhoods generate cache misses and bandwidth contention. Degree-only approaches miss this depth dimension, resulting in suboptimal distribution among nodes with similar degrees but different neighborhood characteristics. Our partitioning minimizes $\sum_{i=1}^{B} \text{Var}(\{\mathcal{H}_\nu : \nu \in \text{Block}_i\})$ across $B$ blocks, achieving theoretical bounds: workload imbalance reduces from $O(\Delta)$ in naive partitioning to $O(\log\Delta)$, where $\Delta$ is maximum degree and imbalance is $\frac{\max_i W_i}{\text{avg}_i W_i}$ for block workloads $W_i$ (Theorem~\ref{LoadBalance}).
\begin{myTheo}\label{LoadBalance}
For a graph with maximum degree $\Delta$ and our hybrid partitioning, the workload imbalance factor is bounded by $O(\log\Delta)$ compared to $O(\Delta)$ for naive partitioning, where workload imbalance is defined as $\frac{\max_i W_i}{\text{avg}_i W_i}$ for block workloads $W_i$.
\end{myTheo}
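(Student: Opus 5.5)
The plan is to make the statement precise enough to prove and then compare the two partitioning schemes block by block. I will fix the workload of node $\nu$ to be proportional to its hybrid metric, $w_\nu=\Theta(\mathcal{H}_\nu)$. Since anonymous walks of length $d$ visit at most $d+1$ distinct nodes, $1\le \mathcal{N}_\nu \le d(d+1)/2$ is a constant for fixed $d$. Hence $\mathcal{H}_\nu=\Theta(\mathcal{D}_\nu)$ up to constants depending only on $\alpha$ and $d$, so every node workload lies in $[1,c\Delta]$ for a constant $c$.

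For the \emph{naive} bound, I take naive partitioning to mean contiguous blocks of equal node count $n/B$, with no regard to $w_\nu$. Each block then has workload between $n/B$ and $(n/B)\,c\Delta$. The ratio $\max_i W_i/\mathrm{avg}_i W_i$ is therefore $O(\Delta)$. The bound is tight when one block collects the high-degree nodes of a power-law graph while the rest have bounded degree.

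For the \emph{hybrid} bound, I will use the dual-path structure of the partitioner. First, bucket nodes geometrically by $\mathcal{H}_\nu$ into classes $C_j=\{\nu: 2^j\le \mathcal{H}_\nu<2^{j+1}\}$. This gives $O(\log(c\Delta))=O(\log\Delta)$ classes. Second, within each class, fill blocks greedily to a common target $T=\mathrm{avg}_i W_i$. Any single class contributes at most $T+2^{j+1}$ to a block, and a block can receive overflow from at most one partial segment per class. Summing over the $O(\log\Delta)$ classes bounds $\max_i W_i\le O(\log\Delta)\cdot T$, as long as $2^{j+1}\le O(T)$. Third, rows with $\mathcal{H}_\nu>T$ (the dense path) are split across warps and blocks into pieces of size at most $T$. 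This enforces the condition $2^{j+1}\le O(T)$ and keeps any single heavy node from dominating.

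I expect the main obstacle to be the overflow accounting in the second step. The loss of a logarithmic factor, rather than a constant, comes from allowing each geometric class to leave one partially filled remainder per block; without the dense-path splitting, a single node of workload $\Theta(\Delta)$ would defeat the bound. I will first try the per-class greedy argument above, charging each block's excess to the class remainders. If that proves too loose, I will fall back to an LPT-style argument on the sorted $\mathcal{H}_\nu$ values, combined with the splitting, which in fact gives a constant factor and so implies the stated $O(\log\Delta)$.

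I also need a precise reading of "our hybrid partitioning". The variance-minimization objective stated earlier does not by itself imply this bound, so I will interpret the scheme as the bucketed greedy-plus-splitting procedure described above.
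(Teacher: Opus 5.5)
Your argument is internally sound for the procedure you define, but it does not prove the statement about the paper's partitioner. The load-bearing step, your Step 3, is exactly what the paper's scheme lacks. In the paper, nodes are indivisible: it sorts them by $\mathcal{H}_\nu$ and cuts the sorted sequence into $B$ contiguous blocks, and the dense path is described as one row per block, subdivided among that block's warps.

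Once rows may be split across blocks into pieces of weight at most $T$, the claim is trivial. As your fallback notes, list scheduling already gives $\max_i W_i\le 2T$. So the geometric bucketing and the $\log\Delta$ it produces are artifacts, and neither the sorted order nor $\mathcal{N}_\nu$ does any work.

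Without splitting, the obstacle you identify cannot be avoided. Every partition has $\max_i W_i\ge\max_\nu w_\nu=\Theta(\Delta)$, so the imbalance is at least of order $\Delta/T=\Theta(\Delta B/\sum_\nu\mathcal{H}_\nu)$. An $O(\log\Delta)$ bound therefore needs an explicit hypothesis such as $\Delta=O(T\log\Delta)$. Under that hypothesis your Step 2 already suffices without Step 3. So does a one-line argument for the paper's scheme: cutting the sorted sequence at prefix-sum multiples of $T$ gives $\max_i W_i\le T+c\Delta$. Cutting into equal node counts instead would group the heaviest nodes together.

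The paper's route is different. It keeps nodes whole and argues from the sorted order. It combines the sparse-mass bound $\sum_\nu\mathcal{H}_\nu=O(n)$, the power-law count of heavy nodes, an average per-position gap over windows of length $n/\log n$, and the choice $B=\Omega(\Delta)$. From these it concludes that contiguous blocks have nearly equal workloads, hence imbalance $O(\log n)$, which it equates with $O(\log\Delta)$.

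That sketch is itself heuristic, for three reasons:
an average gap does not control the spread inside the top block;
$\log n=O(\log\Delta)$ needs $\Delta=n^{\Omega(1)}$;
$B=\Omega(\Delta)$ makes the hypothesis above require $\Delta^2=O(|E|\log\Delta)$.
So your remark that the variance objective alone does not give the bound is fair. But the repair is to state the missing relation between $\Delta$ and the average block load and then analyze the paper's contiguous cut. Substituting a partitioner that splits rows across blocks proves a different theorem. Your naive-partition bound and the observation $\mathcal{H}_\nu=\Theta(\mathcal{D}_\nu)$ are correct and match how the paper's sketch treats $\mathcal{H}_\nu$.
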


\textit{Proof Sketch.}
The proof establishes that naive (random) partitioning achieves $O(\Delta)$ imbalance because high-degree nodes with workload $\Theta(\Delta)$ can randomly cluster in certain blocks---with some blocks receiving multiple high-degree nodes while others receive none, creating imbalance proportional to the maximum degree $\Delta$. In contrast, our hybrid metric approach first sorts all nodes by their combined degree-density score $\mathcal{H}_\nu$, creating a smoothly increasing sequence where adjacent nodes have similar workloads. After sorting nodes by $\mathcal{H}_\nu$, we have the sequence $\mathcal{H}_1 \leq \mathcal{H}_2 \leq ... \leq \mathcal{H}_n$. The key insight is that the total workload mass $\sum_{i=1}^n \mathcal{H}_i = O(n \cdot \text{average degree}) = O(n)$ in sparse graphs. For power-law graphs, only $O(n/\Delta)$ nodes have degree $\geq \Delta/2$, so the workload distribution is not uniform. The steepest increase occurs where high-degree nodes appear in the sorted order. Even in the worst case where all high-degree nodes appear consecutively, the total increase over a range of $\ell = n/\log n$ positions is at most $O(\Delta)$. Therefore, the average gap per position is $\frac{O(\Delta)}{\ell} = \frac{O(\Delta)}{n/\log n} = O(\frac{\Delta \log n}{n})$. When we partition this sorted sequence into $B$ contiguous blocks, each block contains nodes with similar hybrid metric values, ensuring the maximum difference between blocks is bounded by $O(n/B) \times O(\Delta \log n/n) = O(\Delta \log n/B)$. Since the average block workload is $\Omega(n/B)$ and we choose $B = \Omega(\Delta)$ blocks for GPU execution, the imbalance factor becomes $\text{max/avg} = O(\log n) = O(\log \Delta)$, achieving logarithmic rather than linear scaling with the maximum degree.

\paragraph{Time Complexity and Overhead Analysis.}
The preprocessing phase consists of four steps with well-defined complexity bounds. Parallel density computation requires $O(kd)$ time with $O(|V|)$ processors for computing $\mathcal{N}_\nu$ across all vertices. Degree extraction from CSR row pointers takes $O(1)$ per vertex. The adaptive integration phase, which dynamically adjusts $\alpha$ based on variance analysis (defaulting to 0.5), requires $O(|V|)$ time. Finally, sorting and partitioning nodes requires $O(|V|\log|V|)$ time. The total preprocessing complexity of $O(|V|(kd + \log|V|))$ with constants $k,d = O(1)$ is amortized over $T$ training epochs, yielding only $O(\frac{kd + \log|V|}{T})$ overhead per iteration.
\subsection{DualGCN Kernel Design and Implementation}

\subsubsection{Kernel Architecture Overview}
Our kernel leverages dual-metric partitioning for execution. The hybrid metric $\mathcal{H}_\nu$ classifies nodes by degree (width) and neighborhood density (depth). High $\mathcal{H}_\nu$ blocks use dense processing with 4-way ILP; low $\mathcal{H}_\nu$ blocks use sparse processing.

The architecture employs three parallelism levels. Block-level distribution follows variance-minimized partitioning for balanced SM computation. Warp-level processing assigns resources by $\mathcal{D}_\nu$ values---high-degree nodes receive multiple warps ($\lceil \text{row\_nnz}/32 \rceil$), low-degree nodes share warps. Thread-level computation uses $\mathcal{N}_\nu$ values, with dense neighborhoods utilizing ILP.

Three principles guide execution. First, adaptive execution selects paths by partition's $\mathcal{H}_\nu$. Second, memory optimization matches partition characteristics---sparse uses L2 cache via \texttt{\_\_ldg()}, dense allocates 512B-2KB shared memory. Third, hardware utilization follows metrics: $\mathcal{D}_\nu$ determines warps, $\mathcal{N}_\nu$ influences registers and ILP. This creates metric-aware resource allocation throughout execution.
\subsubsection{Detailed Kernel Components}
Our adaptive kernel architecture employs a dual-path strategy that
dynamically selects sparse or dense execution modes based on workload
characteristics. The mechanism uses the hybrid metric $\mathcal{H}$
computed during data partitioning to optimize resource utilization for
each thread block.

\paragraph{Processing Mode Selection.}
Upon kernel invocation, each thread block determines its execution path through the following decision criterion:
\begin{equation}
\text{Processing Mode} = \begin{cases}
\text{Sparse Path} & \text{if } \mathcal{H} < \tau \\
\text{Dense Path} & \text{if } \mathcal{H} \geq \tau
\end{cases}
\end{equation}
where the threshold $\tau$ (ProBound) is empirically calibrated as $\tau = \mu_{\mathcal{H}} + 0.5\sigma_{\mathcal{H}}$, representing the mean plus half the standard deviation of the workload distribution. This threshold effectively separates irregular, sparse computations from regular, dense operations, enabling specialized optimizations for each pattern.

\paragraph{Sparse Block Processing.}
Sparse partitions exhibit irregular memory access patterns and load imbalance, necessitating a specialized processing strategy. We address these challenges through a warp-feature association mechanism that prioritizes memory coalescing while maintaining computational efficiency.

In this approach, we establish a three-tier hierarchy: (1) each warp is assigned a subset of matrix rows, (2) warps process their assigned rows for one feature at a time, and (3) all warps within a block synchronously advance through the same feature range. This synchronization ensures that threads within a warp access contiguous memory regions, achieving coalesced memory transactions. Features are processed in chunks of 32 elements, aligning with the warp size to eliminate divergent execution paths.

Figure~\ref{GCN_parallel}(a) illustrates this workload distribution strategy. The left panel shows the initial assignment of data elements to three warps ($\text{warp}_1$, $\text{warp}_2$, $\text{warp}_3$), with each warp managing three rows across three features ($F_1$, $F_2$, $F_3$). The right panel demonstrates the feature-level parallelization phase, where rows undergo weighted aggregation with feature-specific weights ($w_f$), color-coded to distinguish different computational operations.

\begin{figure}[hbt!]
	\centering
	\includegraphics[height=0.25\textwidth]{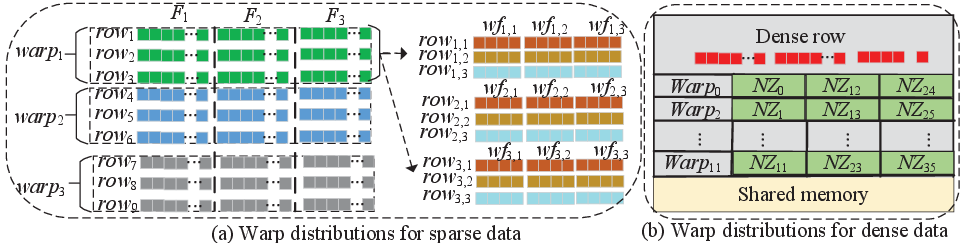}
	\caption{Workload distribution in sparse block processing: (a) warp-to-row assignment and (b) feature-level parallel execution.}\label{GCN_parallel}
	\vspace{-2.5em}
\end{figure}

\paragraph{Dense Block Processing with 4-Way ILP.}
Dense partitions, characterized by regular access patterns and balanced workloads, benefit from aggressive parallelization techniques. Our dense processing path exploits this regularity through a combination of collaborative warp execution and instruction-level parallelism (ILP).

The key innovation lies in our 4-way ILP implementation, where each thread simultaneously processes four independent data elements. This design leverages the GPU's instruction pipeline to overlap memory operations with arithmetic computations, effectively hiding memory latency. Modern GPU architectures support multiple outstanding memory requests per thread, enabling concurrent data fetching while previous computations complete. This organization increases arithmetic intensity, reduces pipeline stalls, and improves occupancy without introducing synchronization overhead among threads within the warp.

Algorithm~\ref{SpMMDense4Way} details our implementation strategy.
Each thread maintains four independent accumulators
($\text{sums}[0..3]$) to eliminate data dependencies, allowing the GPU
scheduler maximum flexibility in instruction reordering. The strided
access pattern, with
$\mathtt{stride}=\mathtt{block\_size}\times 4$, ensures that consecutive
threads access consecutive memory addresses, achieving perfect memory
coalescing. Additionally, all memory operations utilize
\texttt{\_\_ldg()} intrinsics to bypass the L1 cache and directly access
the read-only data cache, optimizing bandwidth utilization for these
predictable access patterns.

\begin{algorithm}[h]
\caption{Dense Block Processing with 4-Way ILP}
\label{SpMMDense4Way}
\SetAlgoLined
\KwIn{Thread ID \texttt{tid}, block size \texttt{block\_size}, sparse values array, column indices, dense matrix $\mathbf{B}$, feature index}
\KwOut{Accumulated sums for four elements}
    \texttt{stride} $\gets$ \texttt{block\_size} $\times$ 4\;
    \texttt{sums}[0..3] $\gets$ 0 \tcp*{Initialize four accumulators}
    \For{$k \gets 0$ \KwTo $3$}{
        \texttt{idx} $\gets$ \texttt{tid} + $k \times$ \texttt{block\_size}\;
        \If{\texttt{idx} $<$ \texttt{nnz}}{
            \texttt{col} $\gets$ \texttt{\_\_ldg}(\texttt{col\_idx}[\texttt{idx}])\;
            \texttt{val} $\gets$ \texttt{\_\_ldg}(\texttt{values}[\texttt{idx}]) $\times$ \texttt{\_\_ldg}($\mathbf{B}$[\texttt{col}][\texttt{feat}])\;
            \texttt{sums}[$k$] $\gets$ \texttt{sums}[$k$] + \texttt{val}\;
        }
    }
    \KwRet{\texttt{sums}}\;
\end{algorithm}

This dual-path architecture ensures that both sparse and dense data patterns receive optimized treatment, maximizing overall kernel performance across diverse workload distributions.

\subsubsection{Memory Management Strategy}
\paragraph{Overview.}
Our memory management strategy addresses the fundamental challenge of maximizing memory bandwidth utilization while minimizing access latency in GPU kernels. We achieve this through three complementary techniques: adaptive memory hierarchy utilization based on data characteristics, optimized access patterns for coalesced memory transactions, and strategic use of shared memory for reduction operations. The kernel dynamically selects between sparse and dense processing paths, each with tailored memory access patterns to match the underlying data distribution.

\paragraph{Memory Hierarchy Utilization.}
The kernel employs a hierarchical memory strategy that adapts to the characteristics of sparse versus dense data blocks. Table~\ref{tab:memory-usage} summarizes the memory usage patterns for each processing path. Both paths leverage \texttt{\_\_ldg()} intrinsics to access read-only data through the L2 cache, bypassing L1 to reduce cache pressure. The sparse path minimizes shared memory usage (primarily for reduction operations) while maintaining 8-10 registers per thread for basic accumulation. In contrast, the dense path allocates 512B-2KB of shared memory for intermediate results and consumes 16-20 registers per thread to maintain four independent accumulators necessary for 4-way instruction-level parallelism (ILP). This adaptive approach ensures efficient resource utilization across varying data densities.
\begin{table}[h]
\centering
\caption{Memory Usage Patterns in DualGCN Kernel}
\label{tab:memory-usage}
\small
\begin{tabular}{l|c|c|l}
\hline
\textbf{Memory} & \textbf{Sparse} & \textbf{Dense} & \textbf{Purpose} \\
\hline
Global (R) & Coalesced & Strided-coal. & Input matrices \\
Global (W) & Atomic & Atomic (red.) & Output accum. \\
Shared & Minimal & 512B-2KB & Reduction \\
L2 Cache & \texttt{\_\_ldg()} & \texttt{\_\_ldg()} & Read-only \\
Registers & 8-10/thr & 16-20/thr & Accumulators \\
\hline
\end{tabular}
\end{table}

\paragraph{Coalesced Memory Access Pattern.}
To maximize memory bandwidth, we implement an access pattern ensuring coalesced transactions within each warp. Each thread $i$ accesses indices $\{i, i+N_t, i+2N_t, i+3N_t\}$, where $N_t$ is the thread count per block. This design serves three purposes:
\begin{enumerate}
    \item \textbf{Coalescing}: Consecutive threads access consecutive memory locations, achieving perfect coalescing.
    \item \textbf{ILP Optimization}: The $N_t$ stride enables independent memory requests, allowing concurrent servicing while arithmetic operations proceed on fetched data.
    \item \textbf{Bank Conflict Minimization}: Regular patterns reduce shared memory bank conflicts during reductions.
\end{enumerate}

\paragraph{Adaptive Shared Memory Configuration.}
Shared memory layout adjusts based on block dimensions. We allocate a \texttt{[WPB]} reduction buffer for partial results, with padding added for power-of-two sizes to prevent bank conflicts. The sparse path uses consecutive addresses for irregular data patterns. The dense path extends this across iterations, with threads advancing by \texttt{blockDim} between iterations (e.g., accessing \texttt{[base, base+31]}, then \texttt{[base+blockDim, base+blockDim+31]}). This maintains coalesced access for larger memory regions, ensuring optimal bandwidth utilization across diverse data distributions.
\section{Evaluation}
\begin{table}[t]
	\caption{Graph Dataset Statistics}
	\label{tab:graph-stats}
	\centering
	\scriptsize
	\setlength{\tabcolsep}{2.5pt}
	\renewcommand{\arraystretch}{1.05}
	\begin{tabular}{@{}lrr@{\hspace{5pt}}lrr@{}}
		\hline
		\textbf{Graph} & \textbf{Nodes} & \textbf{Edges}
		& \textbf{Graph} & \textbf{Nodes} & \textbf{Edges} \\
		\hline
		ARTIST
		& 50,515
		& 1,638,396
		& AMAZON0601
		& 403,394
		& 5,478,357 \\
		
		ARXIV
		& 169,343
		& 1,166,243
		& PPA
		& 576,289
		& 42,463,862 \\
		
		COLLAB
		& 235,868
		& 2,358,104
		& TWITTER
		& 580,768
		& 1,435,116 \\
		
		COM-AMAZON
		& 334,863
		& 1,851,744
		& YELP
		& 716,847
		& 13,954,819 \\
		
		YEAST
		& 1,710,902
		& 3,636,546
		& OVCAR-8H
		& 1,889,542
		& 3,946,402 \\
		
		PRODUCTS
		& 2,449,029
		& 123,718,280
		& CITATION
		& 2,927,963
		& 30,387,995 \\
		\hline
	\end{tabular}
\end{table}
\label{sec:eva}
\begin{figure*}[hbt!]
	\centering
	\includegraphics[width=0.96\textwidth,height=0.25\textwidth]{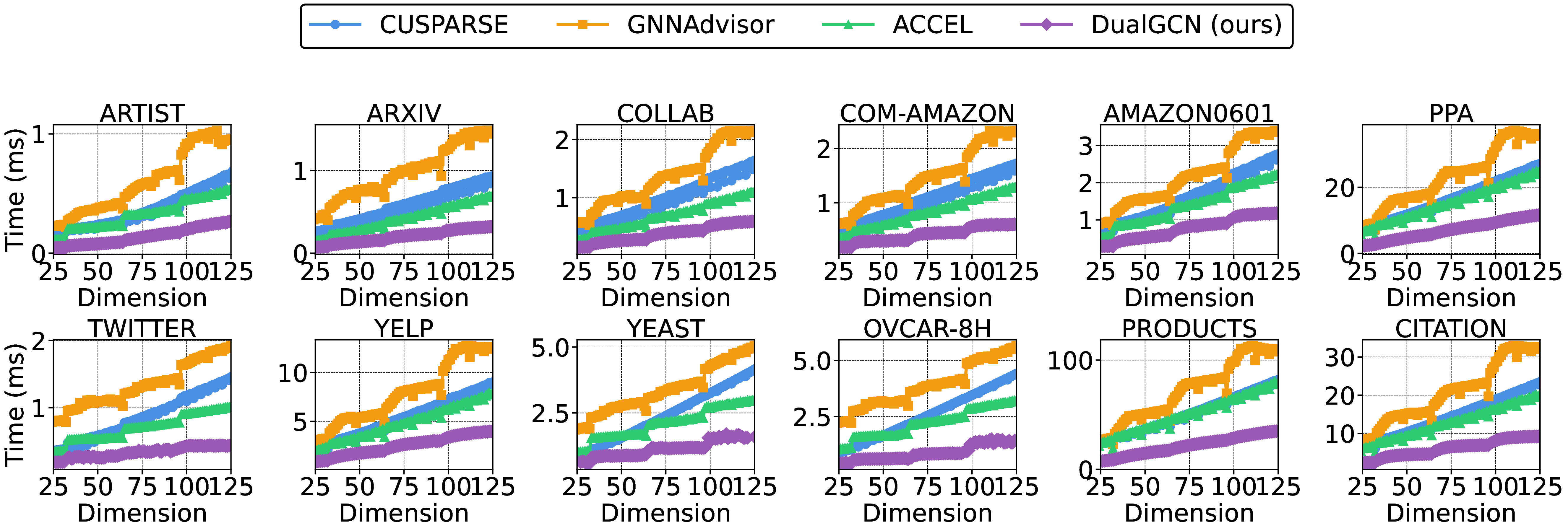}\caption{The execution times of the four GCN accelerators with different feature dimensions of the weight matrix.}\label{fig_compare_performance}
\end{figure*}
\begin{figure*}[hbt!]
	\centering
	\includegraphics[width=0.96\textwidth,height=0.25\textwidth]{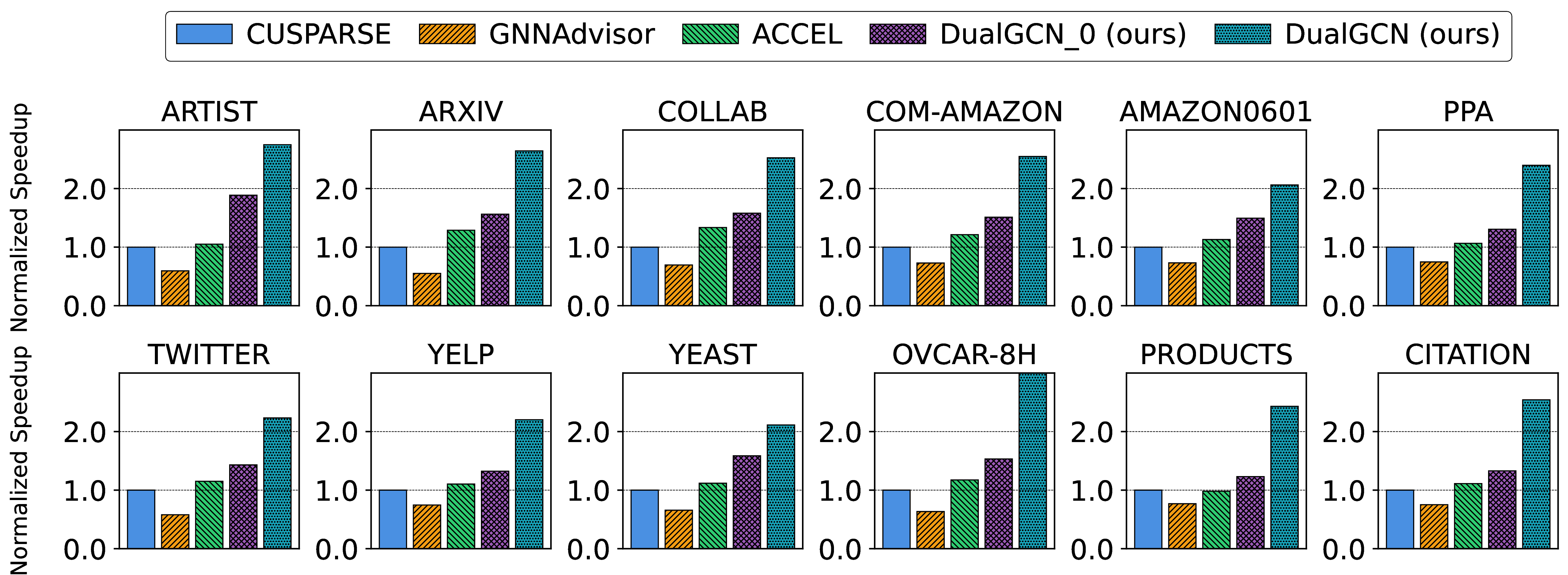}\caption{The average normalized speedup of DualGCN (ours).}\label{fig_compare_bars}
	\vspace{-2.5em}
\end{figure*}
\subsection{Experimental Environment}
Our experiments were conducted on a system running Ubuntu 20.04 equipped with an NVIDIA GeForce RTX 3090 GPU, using CUDA code compiled with NVCC 12.0. We evaluate SpMM performance using $12$ benchmark graph datasets listed in Table \ref{tab:graph-stats}. The SpMM computations, essential for GCN models, involve multiplying sparse matrices derived from these graphs with dense weight matrices having dimensions ranging from $16$ to $128$ columns. These widely-used benchmark datasets were selected for their diverse graph connectivity patterns, enabling comprehensive evaluation of our SpMM kernel optimizations for GCN workloads.

We evaluate DualGCN against leading GCN accelerators: NVIDIA's cuSPARSE
v12.0~\cite{naumov2010cusparse}, GNNAdvisor~\cite{wang2021gnnadvisor},
and ACCEL~\cite{xie2023accel}. While baselines use default partitioning,
DualGCN employs hybrid partitioning based on anonymous random walks.
Specifically, we perform $50$ walks of $30$ steps each to estimate
neighbor density and compute a hybrid metric for partitioning graph data
into sparse and dense components. Most algorithms except GNNAdvisor
achieve linear partitioning complexity and support preprocessing. We
measure only kernel execution time, as data transfer and preprocessing
overheads are comparable across implementations and executed on CPUs.
\subsection{Performance Evaluation}
Figure \ref{fig_compare_performance} demonstrates that our proposed kernel DualGCN achieves superior performance with faster execution times across 12 datasets with column dimensions ranging from $16$ to $128$. Specifically, DualGCN kernel delivers average speedups of $2.53\times$, $3.8\times$, and $2.13\times$ over cuSPARSE, GNNAdvisor, and ACCEL, respectively. These performance gains are attributed to DualGCN's efficient data partitioning scheme and optimized workload distribution, which maximize GPU platform utilization. This analysis confirms that our dual-pronged approach---dividing graphs into sparse and dense components while employing strategic resource assignment---effectively accelerates GCN-based applications.

To isolate the effect of our workload distribution strategy from data
partitioning, we conducted a controlled experiment. We constructed
DualGCN\_0, which adopts ACCEL's baseline partitioning technique while
retaining DualGCN's workload distribution method. Figure~\ref{fig_compare_bars}
compares five kernel implementations. Under the same partitioning scheme
as ACCEL, DualGCN\_0 achieves at least 52\% improvement and up to
$4.5\times$ speedup over existing methods. Combined with our proposed
partitioning approach, the complete DualGCN achieves speedups of
$1.48\times$ to $6.4\times$ across all $12$ datasets, regardless of
feature dimensions. These results demonstrate that both components
independently and jointly contribute to DualGCN's acceleration across
diverse graph structures and configurations.
\section{Conclusion}
We introduce DualGCN, a novel algorithm-architecture optimization that accelerates Graph Convolutional Networks (GCNs) on GPUs. DualGCN leverages anonymous random walks to partition graph data into sparse and dense components. We implement a dual-path workload distribution scheme that enhances data locality and parallel execution in GCN models. Our kernel orchestrates workload allocation and memory access patterns across GPU blocks and warps, leveraging the partitioned structure. Experiments demonstrate DualGCN's performance, achieving average speedups of $2.53\times$, $3.8\times$, and $2.13\times$ over cuSPARSE, GNNAdvisor, and ACCEL, respectively.
\section*{Acknowledgments}
This work was supported in part by the National Natural Science
Foundation of China (NSFC) under Grant No.~62302043, the
Interdisciplinary Project of Capital Normal University under Grant
No.~2026JCZD02, and JSPS KAKENHI under Grant No.~26K21227.
\label{sec:con}

\bibliographystyle{splncs04}
\bibliography{Graph_ISO}
\end{document}